\algnewcommand\algorithmicparfor{\textbf{parallel for}}
\algnewcommand\algorithmicpardo{\textbf{do}}
\algnewcommand\algorithmicendparfor{\textbf{end\ parallel for}}
\DeclareMathOperator{\nnz}{nnz}
\newtheorem{theorem}{Theorem}
\title[Engineering Compressed Matrix Multiplication with the FWHT]{Engineering Compressed Matrix Multiplication with the Fast Walsh-Hadamard Transform}
\begin{document}

\author{Joel Andersson}
\email{joelande@chalmers.se}
\author{Matti Karppa}
\email{matti.karppa@gu.se}
\address{Chalmers University of Technology / University of Gothenburg}

\maketitle

\begin{abstract}

We present an implementation of Pagh's compressed matrix multiplication algorithm, a randomized algorithm that constructs
sketches of matrices to compute an unbiased estimate of their product. By leveraging fast polynomial multiplication 
via the FFT, the algorithm achieves high performance when the product matrix is sparse or contains only a small number
of entries with magnitudes significantly larger than the rest. We show empirically that the algorithm is practical 
and can outperform state-of-the-art \texttt{DGEMM} implementations when the product matrix has few nonzero entries or is otherwise dominated by a small subset of elements with large magnitude. 
As a minor theoretical contribution, we replace the FFT with the Fast Walsh-Hadamard Transform (FWHT) in sketched 
multiplication, preserving all correctness and variance guarantees of the original algorithm.

Experiments with our carefully engineered multithreaded CPU implementation for dense double-precision matrices on
64-core CPU nodes across a range of synthetic benchmarks, exhibiting variable sparsity patterns,
show that the FWHT variant is up to 4 times faster than the FFT-based version. 
Under favorable sparsity and magnitude patterns in the product matrix, 
our FWHT-based implementation achieves a speedup of up to 40 over \texttt{DGEMM} from Intel MKL, 
with low probability of error in the estimates.
Our implementation is released as free software and comes with NumPy-compatible Python bindings.
\end{abstract}

\newpage

\section{Introduction}
\label{sect:introduction}

Matrix Multiplication (MM) is one of the most fundamental primitive operations in modern computation, 
underlying such ubiquitous applications as scientific computation~\cite{GolubL:2013,TrefethenB:1997} and 
modern machine learning~\cite{GoodfellowBC:2016}, so much so that the Generalized Matrix Multiply \texttt{DGEMM} routine from BLAS~\cite{BlackfordDDDHHHKLPPRW:2002}
is routinely used to evaluate world's top supercomputers~\cite{DongarraLP:2003,top500:2025}.
The elementary MM algorithm has seen intensive engineering effort, resulting in very powerful and highly optimized implementations, such
as in Intel Math Kernel Library (MKL)~\cite{Intel:2025}.

In some applications, such as detecting correlations~\cite{KarppaKK:2018,KarppaKKC:2020}, 
we do not need to compute the full product but only want to reliably find \emph{heavy hitters},
the elements which have a larger magnitude than the rest. 
among the sparse product of dense matrices. In 2012, Pagh~\cite{Pagh:2013} presented an interesting randomized algorithm
that addresses precisely this problem. In a nutshell, the idea is to \emph{compress} the operand matrices into polynomials
using 2-wise independent families of hash functions, use Fast Fourier Transform (FFT) to perform fast polynomial multiplication,
and then \emph{extract} an unbiased estimate for the elements of the product matrix from the polynomial product. 
The resulting estimator has
a variance bounded by the Frobenius norm of the product matrix, meaning the algorithm is \emph{output sensitive} and works best
when the product is sparse, that is, dominated by few large elements in absolute value.
We show empirically that Pagh's algorithm is not just a theoretical curiosity, but can be practical and outperform state-of-the-art
\texttt{DGEMM} under favorable circumstances. As a minor theoretical contribution, we show that FFT can be replaced by the
Fast Walsh-Hadamard Transform (FWHT), preserving all theoretical guarantees of the original algorithm.
Specifically, we
\begin{itemize}
    \item Show that FFT can be replaced with FWHT without affecting the theoretical guarantees of the algorithm,
    \item Provide a multithreaded CPU implementation of the algorithm as free software, implemented in C++ with NumPy compatible Python bindings,
    \item Empirically evaluate the implementation on various synthetic datasets, exhibiting differing sparsity patterns, 
    to show that the FWHT-based implementation can be up to 4 times faster than the FFT-based algorithm and up to 40 times faster
    than \texttt{DGEMM}, assuming favorable circumstances.
\end{itemize}

\subsection{Related work}

Since Strassen's 1969 result~\cite{Strassen:1969}, a line of theoretical work 
\cite{Pan:1982,Strassen:1987,CoppersmithW:1990,Williams:2012,LeGall:2014} has sought the bound on time $\mathcal O(n^{\omega})$ that $n\times n$ matrices can be multiplied, 
currently known to be $\omega<2.37286$, due to Alman and Williams~\cite{AlmanW:2024}.
Another line of research has applied group-theoretic and combinatorial methods~\cite{CohnU:2003,CohnKSU:2005,CohnU:2013}.
Despite impressive theory, only Strassen-like algorithms have seen substantial engineering effort~\cite{BallardDHLS:2012,BensonB:2015,HuangSHG:2016,HuangRMG:2017,HuangYG:2018,KarppaK:2019,HuangYG:2020,SchwartzV:2023},
as more complex algorithms are widely believed to be impractical. In fact, most state-of-the-art libraries
rely on engineering the elementary algorithm when implementing their GEMM routine~\cite{GotoG:2008,SmithGSHZ:2014,ZeeG:2015,KerrMDT:2017,XuZG:2023},
largely explained by the predictable and linear memory access patterns of the
elementary algorithm, conforming to the requirements of modern hardware, especially
in terms of cache use.
There has also been a lot of work in the theory of \emph{Sparse matrix multiplication}~\cite{Gustavson:1978,YusterZ:2005,AmossenP:2009,BallardBDGLST:2013,CampagnaKP:2013,PaghS:2014,AbboudBFK:2024}, as well as 
engineering of the equivalent primitive \texttt{SpGEMM}~\cite{BulucG:2012,BorstnikVWH:2014,LiuV:2014,NagasakaNM:2017,DeveciTR:2018,Davis:2019,ZhaoLWZWXLLPLZW:2025}. See also~\cite{GaoJCHWLW:2023} for a survey.

Pagh's algorithm makes use of \emph{probabilistic sketching}, constructing smaller data structures from larger ones 
that enable approximated queries on expensive statistics about the larger structure with probabilistic guarantees,
building on the AMS sketch~\cite{AlonMS:1999}. Random hash functions are a standard tool
in sketching, such as in set membership or cardinality estimation~\cite{Bloom:1970,FlajoletFGM:2007,HeuleNH:2013,XiaoCZL:2020,WangP:2023,Ertl:2024} and Locality-Sensitive Hashing (LSH)~\cite{AndoniI:2006,AndoniR:2015,IndykM:1998}. A curious connection to matrix multiplication may be found in the fact
that the structural tensor encoding matrix multiplication is known to have a \emph{probabilistic rank deficiency}, enabling
faster randomized algorithms~\cite{KarppaK:2019probabilistic}.

\subsection{Paper organization}
The remainder of this paper is organized as follows. In Section~\ref{sect:preliminaries}, we present the mathematical preliminaries and notation. In Section~\ref{sect:algorithm}, we recap Pagh's algorithm and modify it by replacing the FFT with the FWHT.  In Section~\ref{sect:implementation}, we present our implementation and the engineering considerations.
In Section~\ref{sect:experiments}, we describe the experimental setup, including the construction of the synthetic datasets. In Section~\ref{sect:results}, we describe the results
of our experiments, including scalability and accuracy of results. Section~\ref{sect:conclusion} concludes the paper.

\section{Preliminaries}
\label{sect:preliminaries}

We write $[n] = \{1,2,\ldots,n\}$. 
We denote the zero vector of $n$ elements by $\mathbf{0}_n$. We denote the Hadamard (elementwise) product of vectors by $x\circ y$. For a matrix~$A$, we denote its $i^\textrm{th}$ column and row by $a^\top_i$ and $a_i$, respectively, that is, we implicitly assume vectors are row vectors.
We denote the outer product of two vectors by $a\otimes b$.
Unless said otherwise, we are going to assume all matrices are square~$n\times n$ matrices where $n$ is a power of two, even though 
Pagh's algorithm~\cite{Pagh:2013} and our implementation extend to rectangular cases as well. We denote the Frobenius norm
by $\left\Vert A \right\Vert_F = \sqrt{\sum_{i,j=1}^n a_{ij}^2}$. We denote by $\nnz A=|\{ (i,j)\in[n]\times [n] \mid a_{ij}\neq 0 \}|$ the
number of non-zeros in~$A$.

We denote the discrete Fourier transform (FFT) and its inverse (IFFT) of a sequence $x$ by $\mathcal F \{ x\}$ and $\mathcal F^{-1}\{ x \}$, respectively. Likewise, we denote the Walsh-Hadamard transform of a sequence $x$ by $\mathcal H\{x\}$.
We denote the \emph{cyclic convolution} of two $n$-element sequences $x$ and $y$ by
\[
(x * y)[k]=\sum_{i+j\equiv k\mod n} x[i]y[j] \, .
\]
Note that this same sequence is often written $(x * y)[i] = \sum_{j=0}^{n-1} x[j]y[i-j]$, where the sequences are implicitly
extended periodically into infinity such that $y[i]=y[i\mod n]$.

We denote the bitwise XOR of two $n$-bit numbers by $i\oplus j$. We denote the \emph{XOR convolution} of two $n$-element sequences of
$x$ and $y$ by
\begin{equation}
\label{eq:fwhtxorconvolution}
(x *_{\oplus} y)[k] = \sum_{i\oplus j = k} x[i]y[j] \, .
\end{equation}
Note that the same sequence is often written $(x *_{\oplus} y)[i]=\sum_{j=0}^{n-1} x[j]y[i\oplus j]$. See, for example,~\cite[Chapter~23.8]{Arndt:2011}.

We say that a family (set) of hash functions $H$ from a domain $U$ to a set of buckets $[n]$ is \emph{2-wise independent} 
if all $x\in U$ are uniformly
hashed into $[n]$, and each \emph{pair} of hashing events is independent.
That is, $\Pr_{h\sim H}[h(x)=a]=1/n$ for all $x\in U$ and $a\in [n]$, and 
it holds for all
$x\neq y \in U$ and all (not necessarily distinct) $a,b\in [n]$
that 
\[
\Pr_{h\sim H} [h(x)=a \textrm{ and } h(y)=b] = \Pr_{h\sim H}[h(x)=a] \Pr_{h\sim H}[h(x)=b] \, .
\]

\section{Algorithm}
\label{sect:algorithm}

\subsection{Pagh's algorithm recap}

We recap Pagh's algorithm~\cite[Fig.~1]{Pagh:2013} here for completeness. Pagh's algorithm is a randomized algorithm
that consists of two subroutines: \textsc{Compress} that constructs a sketch of the product $C=AB$ and \textsc{Decompress}
that extracts an unbiased estimate of an element~$c_{ij}$ from the sketch. 
The underlying idea is based on the observation that the matrix product can be written as a sum of vector outer products:
\[
    C = \sum_{k=1}^n a^\top_k \otimes b_k = AB \, .
\]
We can then sketch each such outer product separately by constructing polynomials $p_A,p_B : [b]\to \mathbb R$ for 
$A$ and $B$, respectively, 
using random hash functions $h_1,h_2:[n]\to[b]$ to map each row of $A$ and each column of $B$ to a random
coefficient. We accumulate the values into the coefficients of the polynomial, using
another two random hash functions $s_1,s_2:[n]\to\{-1,1\}$ to provide a random sign to the each element.
That is, for a fixed $k\in[n]$, for each $i\in [n]$, we accumulate $p_A[h_1(i)]\gets p_A[h_1(i)]+s_1(i)A_{ik}$,
and correspondingly for each $j\in[n]$ we accumulate $p_B[h_2(j)]\gets p_B[h_2(j)]+s_2(j)B_{kj}$.

We then use FFT to transform the coefficients into a point-value representation,
meaning the polynomial sketch for the product can be obtained by an elementwise multiplication,
followed by IFFT. This amounts to computing the cyclic convolution of the coefficient sequences 
$p = p_A * p_B$. Specifically, choosing the hash functions from a 2-wise independent family means the 
hash functions can be decomposed modulo $b$: $h(i,j)=h_1(i)+h_2(j)\mod b$ and $s(i,j)=s_1(i)s_2(j)$, 
preserving 2-wise independence~\cite{CarterW:1979}. Thus, the procedure accumulates the elements of $c_{ij}$
into the coefficient $p[h(i,j)]$. As usually $b\ll n^2$, this means that multiple elements can be accumulated
in the same coefficient, but the noise elements should cancel in expectation, due to 
random signs. Finally, the procedure is repeated independently~$d$ times and the median of the estimates is used as output.

In addition to the choice of the hash function families, the algorithm has two free parameters: the size of the sketch~$b$ and 
the number of sketches~$d$. Pagh proves that the estimate $\tilde c_{ij}$ is unbiased and has a variance
bounded by $\left\Vert AB \right\Vert_F^2/b$~\cite[Theorem~3.1]{Pagh:2013}, and that \textsc{Compress} runs in time $\mathcal O(d(\nnz A + \nnz B + n b \log b))$~\cite[Lemma~2.2]{Pagh:2013}. A consequence of this is that if the input matrices are dense and square, 
then the running time becomes $\mathcal O(dn^2+dn b\log b)$. The $b\log b$ term is a direct consequence of applying Cooley-Tukey 
FFT~\cite{CooleyT:1965} on the polynomial of length~$b$, and $dn$ comes from the number of outer product sketches. 
The variance bound makes the algorithm \emph{output sensitive}:
the algorithm is best suited to cases where the output is sparse and contains only a few \emph{heavy hitters}. That is,
most elements are either zero or close to zero, and the heavy hitters with a significantly larger magnitude
dominate the Frobenius norm.

\subsection{Replacing the FFT with the FWHT}

Heavy lifting in the algorithm
is done by the FFT and the IFFT. However, the FFT can be easily
replaced with the Fast Walsh-Hadamard Transform (FWHT), as shown in Algorithm~\ref{algo:paghfwht}.
It is well known that the FWHT can be computed in a manner analogous to the Cooley-Tukey FFT (see for example~\cite[Chapter~23]{Arndt:2011}) in $\mathcal O(n\log n)$ time, so this change does not affect asymptotics of the algorithm.

However, the FWHT has other important properties that improve the running time if the running time is compute-bound.
Firstly, computing the FWHT does not require complex arithmetic, as all coefficients involved are $\pm 1$,
apart from the (potentially irrational and in our case irrelevant) normalization factor. This also means
that the computation does not need to make use of multiplications. Furthermore, $\mathcal H\{\mathcal H\{x\}\}=x$, so
the routine is its own inverse, and the transform can be easily computed in-place. This means that the FWHT has
a smaller memory footprint, making it somewhat cache-friendlier.

Another difference is that $\mathcal H\{\mathcal H\{x\}\circ\mathcal H\{y\}\}[k] = (x *_{\oplus} y)[k] = \sum_{i\oplus j=k}x[i]y[j]$, as stated in~\eqref{eq:fwhtxorconvolution}. This has the effect that 
if, in the original algorithm, the monomials corresponding to 
$c_{ij}$ were accumulated in the coefficient $p[h_1(i)+h_2(j)\mod b]$, in this case they are accumulated in the coefficient
$p[h_1(i)\oplus h_2(j)]$. However, it is obvious that the number of monomials accumulated is still equal. Furthermore, it is well
known that $h(i,j)=h_1(i)\oplus h_2(j)$ is 2-wise independent if $h_1$ and $h_2$ are independently sampled from a 2-wise independent 
family~\cite{CarterW:1979}, so it is obvious that Algorithm~\ref{algo:paghfwht} preserves all the properties of the original algorithm. Although the result is rather trivial, we phrase this as a theorem.

\begin{algorithm}[t]
\begin{algorithmic}[1]
    \Function{Compress}{$A,B,d,b$}
    \For{$t\in[d]$}
        \State Draw $s_1[t],s_2[t] : [n]\to \{-1,+1\}$ from a 2-wise independent hash family
        \State Draw $h_1[t],h_2[t] : [n]\to [b]$ from a 2-wise independent hash family
        \State $p[t]\gets \mathbf{0}_b$ \Comment{Real-valued}
        \For{$k\in[n]$}
            \State Let $p_A\gets \mathbf 0_n$
            \For{$i\in[n]$}
                \State $p_A[h_1[t](i)]\gets p_A[h_1[t](i)] + s_1[t](i)A_{ik}$
            \EndFor
            \State Let $p_B\gets \mathbf 0_n$
            \For{$j\in[n]$}
                \State $p_B[h_2[t](j)]\gets p_B[h_2[t](j)] + s_2[t](j)B_{kj}$
            \EndFor
            \State $p[t]\gets p[t]+\mathcal H\{p_A\} \circ \mathcal H\{p_B\}$
        \EndFor
        \State $p[t]\gets \mathcal H\{p[t]\}$
    \EndFor
    \State \Return ($p,s_1,s_2,h_1,h_2$)
    \EndFunction
    \State
    \Function{Decompress}{$p,s_1,s_2,h_1,h_2,i,j$}
        \State $x\gets \mathbf 0_d$
        \For{$t\in[d]$}
            \State $x_t\gets s_1[t](i)s_2[t](j)p[t][h_1[t](i) \oplus h_2[t](j)]$
        \EndFor
        \State \Return \Call{Median}{$x$}
        \EndFunction
\end{algorithmic}
\caption{The FWHT version of Pagh's algorithm. The main difference is that the transform can be performed in-place and modular addition has been changed to XOR. In \textsc{Compress}, the hash functions $s_1,s_2,h_1,h_2$ are stored in an array and returned together with the product polynomial $p$, constituting the sketch together. The function \textsc{Decompress} returns the element $\tilde c_{ij}$ of the sketched product. Note that the FWHT is its
own inverse: $\mathcal H^{-1}\{x\}=\mathcal H\{x\}$.}
\label{algo:paghfwht}
\end{algorithm}

\begin{theorem}
Algorithm~\ref{algo:paghfwht} satisfies all the properties of Pagh's algorithm~\cite{Pagh:2013}. Consequently, Lemma~2.2 and Theorems 3.1--3.5 of~\cite{Pagh:2013}
also hold for Algorithm~\ref{algo:paghfwht}.
\end{theorem}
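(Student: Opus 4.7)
The plan is to reduce the theorem to two elementary observations: (i) after \textsc{Compress}, the coefficient $p[t][k]$ can be written in exactly the same algebraic form as in Pagh's original algorithm, except with modular addition replaced by bitwise XOR in the index; and (ii) the composite hash $h(i,j) = h_1(i) \oplus h_2(j)$ together with the composite sign $s(i,j) = s_1(i)s_2(j)$ is 2-wise independent, just as its modular-addition counterpart. Since every claim in Pagh's Lemma~2.2 and Theorems~3.1--3.5 depends only on the algebraic form of the estimator and on 2-wise independence of the composite hash, the properties will transfer verbatim.

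First, I would establish the representation of $p[t][k]$. Starting from the outer-product decomposition $C = \sum_{k'} a^{\top}_{k'} \otimes b_{k'}$, the inner loop of \textsc{Compress} sketches each outer product as $\mathcal{H}\{p_A\}\circ \mathcal{H}\{p_B\}$. By the XOR-convolution identity~\eqref{eq:fwhtxorconvolution} and the self-inverse property $\mathcal{H}\{\mathcal{H}\{x\}\} = x$, the final application of $\mathcal{H}$ on line~17 yields
\[
p[t][k] \;=\; \sum_{\substack{(i,j)\,:\\ h_1[t](i)\,\oplus\, h_2[t](j)\,=\,k}} s_1[t](i)\,s_2[t](j)\, c_{ij}\,.
\]
This is structurally identical to Pagh's formula with addition modulo $b$ replaced by $\oplus$, and with the FFT/IFFT pair replaced by two invocations of $\mathcal{H}$.

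Second, I would invoke the Carter--Wegman argument already cited in the paper: if $h_1,h_2$ are drawn independently from 2-wise independent families, then $h(i,j) = h_1(i)\oplus h_2(j)$ is 2-wise independent into $[b]$, and similarly $s(i,j) = s_1(i)s_2(j)$ retains the sign-hashing properties Pagh requires. Consequently the output of \textsc{Decompress} has exactly the same joint distribution (as a function of the input $A,B$) as in Pagh's algorithm, up to relabelling of the bucket index. Since Pagh's proofs of unbiasedness, of the variance bound $\Vert AB\Vert_F^2/b$, and of the median-of-$d$ concentration arguments use only this distributional structure and never the specific group operation, they apply line-by-line. The running-time claim of Lemma~2.2 is preserved because the FWHT also runs in $\mathcal{O}(b\log b)$ time.

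The main obstacle is purely bookkeeping: one must audit each of Pagh's proofs to confirm that nothing secretly depends on properties unique to cyclic convolution (such as roots of unity, complex conjugation, or the specific algebraic structure of $\mathbb{Z}/b\mathbb{Z}$). Since Pagh's analysis is combinatorial in $h$ and $s$ rather than analytic in the transform, I expect this audit to go through without incident, and the theorem then follows by direct citation.
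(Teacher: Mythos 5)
Your proposal is correct and follows essentially the same route as the paper: both arguments reduce to the observation that the composite XOR hash $h_1(i)\oplus h_2(j)$ inherits 2-wise independence from its factors, that each monomial $a_{ik}b_{kj}$ lands in exactly one coefficient just as under modular addition, and that the FWHT preserves the $\mathcal{O}(b\log b)$ cost. Your explicit closed form for $p[t][k]$ is a slightly more detailed rendering of the paper's ``same distributional guarantees'' remark, but the proof strategy is identical.
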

\begin{proof}
    Both algorithm have the same \emph{transform $\rightarrow$ elementwise multiplication $\rightarrow$ inverse transform} structure to compute the convolution
    of sequences of length~$b$: cyclic convolution for FFT and XOR-convolution for FWHT. Exactly $b$ product monomials from the constituent polynomials are accumulated in each coefficient of the polynomial~$p$, and each  $a_{ik}b_{kj}$ contributes to exactly one polynomial coefficient, so the sets of monomials are disjoint across coefficients.
    Moreover, since the hash functions $h_1,h_2,s_1,s_2$ are drawn from 2-wise independent hash families, the combined hash function is
    2-wise independent in both cases, inducing a similar partitioning of the monomials into $b$ buckets with the same distributional guarantees
    as in the proofs of~\cite{Pagh:2013}. Consequently, the error and variance bounds carry over unchanged.
    Finally, FWHT runs in $\mathcal O(b\log b)$ time by a Cooley-Tukey construction, preserving the asymptotic running times.
\end{proof}

\section{Implementation}
\label{sect:implementation}

\subsection{Overview}

We have implemented the algorithm in C++, optimized for Intel CPUs. The implementation is available as free software at
our GitHub repository\footnote{\url{https://github.com/mkarppa/compmatmul}}. 
The implementation uses FFTW for computing the FFT~\cite{FrigoJ:2005} and FXT for computing the FWHT~\cite{Arndt:2025}.
For linear algebra and array routines, we use the Intel MKL~\cite{Intel:2025}. We provide 
NumPy-compatible Python bindings with pybind11~\cite{JakobRM:2017}. 
We provide an Apptainer definition file for containerization in standard HPC environments. 
For reference baseline, we provide a wrapper with corresponding interface to the MKL implementation of \texttt{DGEMM},
the standard BLAS matrix multiplication primitive.
The present implementation described here is 
limited to only double-precision floating-point numbers due to numerical stability issues with lower precision floats
which we hope to address in the full version of this paper. The current implementation also only supports dense
input and output matrices. Although we have not reported experiments in this conference version, the implementation also supports
rectangular matrices, in addition to square. The structure of the FWHT implementation of \textsc{Compress}
is shown in Algorithm~\ref{algo:implementationfwht}.

\begin{algorithm}[t]
\begin{algorithmic}[1]
    \Function{Compress}{$A,B,d,b,h_1,h_2,s_1,s_2$}
    \State Allocate output array $p$ of $d\times b$ real words
    \State Initialize $d$ locks
    \State For each thread, allocate local real array $p_A$ of size $d\times b$, and $p_B$ of length $b$
    \ParFor{$k\in [n]$}
        \For{$t\in[d]$}
            \State $p_A[t]\gets \mathbf 0_b$
            \For{$i\in [n]$}
                \State $p_A[t][h_1[t](i)] \gets p_A[t][h_1[t](i)] + s_1[t](i)a^\top_{ki}$
            \EndFor
            \State $p_A[t] \gets \mathcal H\{p_A[t]\}$
        \EndFor
        \For{$t\in[d]$}
            \State $p_B \gets \mathbf 0_b$
            \For{$j\in [n]$}
                \State $p_B[h_2[t](j)] \gets p_B[h_2[t](j)] + s_2[t](j)b_{kj}$
            \EndFor
            \State $p_B \gets \mathcal H\{p_B\}$
            \State $p_B \gets p_A[t]\circ p_B$
            \State Acquire lock $t$
            \State $p[t]\gets p[t] + p_B$
            \State Release lock $t$
        \EndFor
    \EndParFor
    \State For each thread, deallocate $p_A,p_B$
    \ParFor{$t\in[d]$}
        \State $p[t]\gets \mathcal H\{p[t]\}$
    \EndParFor
    \State \Return $p$
    \EndFunction
\end{algorithmic}
\caption{Implementation of the FWHT version of the compression routine. Note that the order of loops
is changed, the left-hand-operand is transformed first, and the transform of the right-hand
operand is interleaved with product accumulation.}
\label{algo:implementationfwht}
\end{algorithm}

\subsection{Hash functions}

The analysis in~\cite{Pagh:2013} does not specify the hash function families, other than requiring them to be 2-wise independent.
In our experiments, we use the 32-bit \emph{multiply-add-shift} hash functions, that is,
the hash function family is $H = \{ h_{a,c} : [2^{32}]\to[b]\mid a,c\in [2^{64}]\}$ where 
$h_{a,c}(x)=\left\lfloor\frac{ax+c}{2^{64-\log_2 b}}\right\rfloor$, that is, we perform a 64-bit multiplication and addition, and
take the $\log_2 b$ highest bits of the result as the hash value. The hash function is constructed simply by drawing 
two random 64-bit integers $a$ and $c$. This family of hash functions is well known to be 2-wise independent~\cite{CarterW:1979} and 
practical~\cite{Thorup:2015}. We use the same family for both the bucket hashes $h_1,h_2$
and the sign hashes $s_1,s_2$, where the hash is obtained by simply setting $b=2$.

We have also implemented and experimented with simpler (multiply-add) and more complex (simple tabulation hashing~\cite{PatrascuT:2012}) hash functions, but these experiments have been omitted from this conference version.

\subsection{Memory access patterns}
As the algorithm makes rather intensive use of memory accesses, we have applied some care in ensuring
mostly sequential access in memory, and tried to avoid filling up the CPU cache prematurely.
Assuming the input matrices are stored in row-major order, considering that we are trying to sketch
outer products of the form $a_k^\top b_k$, it makes sense to have the left-hand operand transposed.
This is contrary to the ordinary wisdom of having the right-hand operand transposed for ordinary matrix multiplication.

The most memory-access-intensive part of the implementation is in the compression routine when the operand-specific
polynomial sketches are constructed. To this end, our implementation inverses the order of loops from the abstract
description of 
Algorithm~\ref{algo:paghfwht}, and we construct the $dk$ untransformed polynomial
pairs simultaneously, with the outer loop iterating over the inner dimension and the inner loop iterating over the independent
sketches. This order is friendlier to the cache, as we then better preserve locality of reference. 

In the case of the FFT, After the $d$ independent sketches
have been constructed, we transform them simultaneously in parallel. Finally, we perform elementwise multiplication
and inverse transforms in parallel. 

The FWHT version is somewhat simpler as it requires much less auxiliary space. We make fewer allocations and transform all 
left-hand side polynomials first. Then, we perform the right-hand side transforms one-by-one, interleaved with computing
and accumulating the products, reusing the memory. All transforms
are performed in-place.

\subsection{Multithreading}

We use OpenMP for multithreading. In both the FFT and FWHT cases, we have organized our memory accesses such 
that the only place where read-after-write can
occur is in the inner loop of the sketch construction, after the transform and elementwise multiplication.
That is consequently the only place where locks are necessary.

The FFT and the FWHT versions have been organized slightly differently, but both have two distinct parallel blocks,
one for constructing the sketches of the operands and accumulating the transformed polynomials, 
and a second one for performing the inverse transforms. The difference is that, in the FFT version,
there are two separate inner for loops, as the sketches are constructed for both $p_A$ and $p_B$ completely before performing the
transforms. 
In the FWHT case, only $p_A$ is constructed first, and the transforms are interleaved with the sketch construction
of $p_B$. This is enabled by the fact that we can do a lot of the operations in-place, with better locality of reference.

\section{Experiments}
\label{sect:experiments}

\subsection{Overview}

We have conducted experiments to empirically evaluate the quality of the implementation with respect to the theoretical guarantees
of the algorithm, and to also empirically evaluate the running time. All experiments have been conducted with synthetic
data that has been crafted with instances of varying difficulty. 
The experiments were run on two different
HPC clusters with two kinds of nodes, with 64 CPU cores each, called \emph{Vera} and \emph{Minerva}.
The hardware information is reported in Table~\ref{tab:hardware}.
All experiments were run in Apptainer containers, derived from Ubuntu Linux, using the Miniforge3 image.
The experiments were executed using SLURM. All experiments were run using the Python bindings, that is, the experiment scripts were written
in Python.

Due to the setup of the clusters, even though exclusive access to the nodes was
requested, there appeared to be a lot of difficult to control sources of noise, resulting in unexplainable variations
in running times due to factors that we were not able to control ourselves. Timing results from Minerva are more reliable,
as we had direct BIOS and sudo access to the machine and could disable some sources of noise, such as CPU frequency scaling
and boost clocks.

\begin{table}[t]
    \centering
    \caption{Hardware configuration of the nodes the experiments were run on.}
    \label{tab:hardware}
    \begin{tabular}{lrr}
        \textbf{Property}     & \textbf{Vera}          & \textbf{Minerva}       \\\hline
        CPU model             & Intel Xeon Platinum 8358 & Intel Xeon Gold 6548N \\
        Sockets$\times$Cores  & $2\times 32$             & $2\times 32$ \\
        Max frequency         & 3.4~GHz                  & 4.1~GHz\\
        Base frequency        & 2.6~GHz                  & 2.8~GHz\\
        Cache per socket      & 48~MiB                   & 60~MiB\\
        Architecture          & Ice Lake                 & Emerald Rapids\\
        Total system RAM      & 1024~GiB                 & 512~GiB \\
    \end{tabular}
\end{table}

\subsection{Instances}

All instances consisted of dense input matrices,
and product matrices had either one, $\log_2 n$, or $n$ \emph{big} elements. The remaining elements are called \emph{small}.
The task is to be able to robustly identify the big elements from the small in the output, preferably with low absolute error.
Depending on the instance, the small elements may be all zeros, mostly zeros or near-zeros, or otherwise significantly smaller elements
than the big elements. This means that some instance outputs can be considered sparse while others are dense. We provide a more
detailed description of each instance below. Table~\ref{tab:instances} also summarizes the instances.

\begin{table}[t]
    \centering
    \caption{Summary of the different instance types. The properties relate to the product matrices.}
    \label{tab:instances}
    \begin{tabular}{@{}l@{\,}l@{\,}r@{\,\,}r@{\,\,}r@{}}
        \textbf{Instance}          & \textbf{Big value} & \textbf{\#Big elems} & \textbf{\#Non-zeros}   & \textbf{Sparse/Dense} \\\hline
         \textsc{Logunit}          & 1.0 & $\log_2 n$               & $n$                     & Sparse                \\
         \textsc{Diagonal}         & $0.5\leq |c_{ij}|\leq 1.0$  & $n$                      & $n$                     & Sparse                \\
         \textsc{Covariance}       & $\approx 0.8$ & 1                        & $n^2$                   & Dense                 \\
         \textsc{Lightbulb}        & $\approx 0.8$ & 1                        & $n^2$                   & Dense                 \\
    \end{tabular}
\end{table}

\textbf{Logunit.} 
The easiest instance type is called \emph{logunit} for ``logarithmic number of units (ones)''.
The output matrix $C$ contains exactly $\log_2 n$ elements with value 1, constituting the big elements,
together with $n-\log_2 n$ elements with value 0.001 which are considered essentially 0 and small.
The values have been placed such that each column and row contain exactly 1 non-zero. Thus, the matrix has full rank.
The input matrices have been constructed as follows: The left-hand operand matrix $A=\tilde AD_1$,
where $\tilde A$ has been drawn uniformly at random from $U(-1,1)$, and $D_1$ is a diagonal matrix containing
randomly assigned $\left\lfloor \frac{\log_2 n}{2}\right\rfloor$ elements with value 100, with the remaining diagonal elements
set at 0.01. The right-hand operand $B=\tilde A^{-1}D_2P$  where $D_2$ is a similar diagonal matrix with random $\left\lceil \frac{\log_2 n}{2}\right\rceil$ elements, disjoint from the big elements of $D_1$, with value 100 and the remaining diagonal elements set to 0.01. $P$ is a random permutation matrix. The resulting
product $C=AB=D_1\tilde A\tilde A^{-1}D_2P=(D_1D_2)P=DP$ has exactly the desired properties. 

\textbf{Diagonal.}
A slightly more difficult variant of the logunit, we construct two matrices such that the resulting matrix is a diagonal matrix,
the columns (or rows) of which have been permuted randomly. The diagonal elements are drawn from the interval
$[-1,1)\setminus [-0.5,0.5)=[-1,-0.5)\cup[0.5,1)$, that is, all diagonal elements $x$ are required to satisfy $0.5\leq |x| \leq 1$.
This means that the product contains $n$ non-zeros, which are the big elements of the instance.
The construction is similar: we draw the random diagonal matrix $D$, the random permutation matrix $P$, and matrix $A$ drawn uniformly at random from $U(-1,1)$, and then set $B=A^{-1}PD$, thus $AB=AA^{-1}PD=PD$. 

\textbf{Covariance.}
This instance aims for finding a single strongly correlated pair of random variables across a set of two,
mostly uncorrelated Gaussian variables. There is thus only one big element, but $n^2$ non-zeros in the output.
The instance is constructed by first drawing two independently random Gaussian matrices $A$ and $\tilde B$.
Then, we plant a correlated pair $(i^*,j^*)$ by selecting a random row in $A$ and a random column in $B$,
and then construct $B$ such that 
\[
b^\top_j=
\left\{
\begin{array}{rl}
\rho a_{i^*} + \sqrt{1+\rho^2} \tilde b^\top_{j^*} & \textrm{if } j= j^* \, ,\\
\tilde b^\top_j & \textrm{otherwise.}
\end{array}
\right.
\]
We use $\rho = 0.8$. This results in a product matrix where there is exactly one element $c_{i^*j^*}\approx 0.8$, and
all other elements are close to zero, for large enough $n$. This can be seen as a \emph{dichromatic} covariance
estimation problem, where there are two sets of variables, and one hidden pair of variables is correlated, so we aim to find
a pattern among noise. This instance is considerably harder than logunit or diagonal.

\textbf{Lightbulb.}
Perhaps the hardest instance is based on the lightbulb problem \cite{Valiant:1988}: a set of lightbulbs
is observed turning on and off uniformly at random, apart from a planted pair, that is correlated.
Specifically, we address the bichromatic outlier correlations version of the problem~\cite{KarppaKK:2018,KarppaKKC:2020}:
given two sets of random variables $X=\{x_1,x_2,\ldots,x_n\},Y=\{y_1,y_2,\ldots,y_n\}$, 
uniformly and independently distributed over $\{-1,1\}$, apart from a planted pair $(x_{i^*},y_{j^*})$
that is correlated, that is, $\left<x_{i*},y_{j*}\right>=\rho$, find the unique planted pair.
The matrices $A,\tilde B$ thus correspond to a sample drawn from uniform distribution over $\{-1,1\}$,
and the right hand matrix $B$ is constructed such that all other columns $b^\top_j$ are equal to 
$\tilde b^\top_j$, except the column $b^\top_{j^*}$, that is constructed by copying the row $a_{i*}$
and then flipping the sign of $d=\frac{n(1-\rho)}{2}$ elements, corresponding to the Hamming distance
that yields an inner product of $\rho$ between the planted vectors. We use $\rho=0.8$. Thus, 
the resulting product contains exactly one element $c_{i^*j^*}\approx 0.8$, and all other elements are close to zero,
for sufficiently large $n$, meaning there is one unique big element and $n^2$ non-zeros in the output.

\subsection{Variance experiments}

Theorem~3.1 in~\cite{Pagh:2013} gives us the guarantee that, having constructed a sketch $\tilde C$ of the product $C=AB$ using $d=1$ with
the algorithm, decompressing an entry $(i,j)$ yields an unbiased estimate~$\tilde c_{ij}$ of $c_{ij}$ with variance bounded by $\left\Vert A \right\Vert_F^2/b
$, that is, the variance depends on the overall magnitude of elements in the correct product matrix. In order to ensure that our
implementation conforms to this theoretical prediction, we fix $n=1024$, the input matrices $A,B$, and $(i,j)$, 
and repeatedly construct the sketch with randomly drawn hash functions, varying $b$ between different trials. 
For each $b$, we conduct 1000 repetitions. We use the estimates of an individual big element to compute the sample variance.

\subsection{Correctness experiments}
\label{sect:correctnessexperiments}

To evaluate the \emph{correctness of estimates}, we collect information about the output estimates. Specifically, as all
\emph{big} elements ought to have a magnitude in excess of 0.5, how many such elements are satisfied, and, vice versa, how many
\emph{small} elements have a magnitude below 0.5. The idea is that this enables us to tell these elements apart.
We also collect information about whether the elements are within 0.1 in absolute error from the correct values. Specifically,
denoting the output estimate matrix by $\tilde C$ and the correct by $C$, the set of big element indices by $B\subseteq [n]\times [n]$ and the set of small element indices by $S\subseteq [n]\times [n]$ we collect
\begin{itemize}
    \item The number of big elements in excess of 0.5, that is, $B_{\geq 0.5}=|\{(i,j)\in B \mid |\tilde c_{ij}| \geq 0.5 \}|$,
    \item The number of small elements less than 0.5, that is, $S_{\leq 0.5}=|\{(i,j)\in S \mid  | \tilde c_{ij}| \leq 0.5 \}|$,
    \item The number of big elements within 0.1 in absolute error of the correct value $B_{\epsilon\leq0.1}=|\{  (i,j)\in B \mid |\tilde c_{ij} -c_{ij} |\leq 0.1 \}|$, and
    \item The number of small elements within 0.1 in absolute error of the correct value $S_{\epsilon\leq0.1}=|\{ (i,j)\in S \mid |\tilde c_{ij} -c_{ij} |\leq 0.1 \}|$.
\end{itemize}

In a smaller experiment, we fix $n=8192$ and compute the product on 100 distinct matrices, each time with different random hash functions.
In a larger experiment, we fix $n=32768$ and $n=65536$ and compute the product on 10 distinct matrices, with 10 repetitions using different
random hash functions.

\subsection{Scaling experiments}

To evaluate the scaling of the algorithm as the function of the size of input~$n$, we vary $n=16,32,64,\ldots,65536,131072$,
that is, powers of 2, and perform 1+5 repetitions on the same input matrix with the same hash functions. The running time
of the algorithm is not particularly sensitive to the data, apart from the potential memory effects that might be caused
by the access patterns induced by the hash functions, but with large inputs, we expect any such effects to be drowned in noise from
other sources. The first extra repetition is included to warm up the cache, so that the remaining repetitions are comparable.
We report the wall clock time for running the experiment.

\subsection{Parameter selection}
\label{sect:experimentsparameterselection}

Pagh does not prescribe exact choice of parameters in~\cite{Pagh:2013}. However, Theorems~3.4 and~3.5 in \cite{Pagh:2013} 
give some indications. Theorem 3.4 suggests choosing $b\geq 8 \nnz C$ and $d\geq 6 \log_2 n$ to ensure that the
matrix product can be computed exactly; however, this is rather
pessimistic and does not extend well to cases where $C$ is dense. Theorem~3.5 suggests choosing
$d\geq 6\log_2 n$ to get a high-probability bound on the absolute error that depends on~$b$.

Drawing inspiration on these theorems and their proofs, we parametrize our algorithm with two constants $c_d,c_b>0$,
such that we set $d=2\cdot\left\lfloor\frac{c_d\log_2 n}{2}\right\rfloor+1$, that is, the next odd number greater than
or equal to $c_d\log_2 n$, and $b=c_bn$.

We perform parameter selection in a two-phase grid search. First on $n=1024$ matrices, with 100 repetitions,
yielding candidate parameter pairs which we categorize as follows:
\begin{itemize}
    \item \emph{Perfect} if $B_{\epsilon\leq 0.1}+S_{\epsilon\leq 0.1}=n^2$ in all cases, that is, we always are close to the
    correct value in absolute terms,
    \item \emph{Good} if $B_{\epsilon\leq 0.1}=B$ and $|S_{\epsilon\leq 0.1}|\geq |S|-n$ in all cases, that is, we always correctly compute
    the big elements but allow some slack for the small elements,
    \item \emph{Decent} if $|B_{\epsilon\leq 0.1}|\geq \max\{|B|-\log_2 n,1\}$ and $|S_{\epsilon \leq 0.1}|\geq 0.99|S|$ in all cases, that is we now also allow a little slack for the big elements as well, and
    \item \emph{Satisfactory} if $|B_{\geq 0.5}|\geq 0.99|B|$ in 99\% of cases and $|S_{\leq 0.5}|\geq 0.99|S|$ in 99\% cases.
\end{itemize}

These parameter choices may be overly strict because they emphasize finding the big elements in almost all cases. Having identified the candidate parameters, 
we only keep Pareto-optimal pairs and evaluate them against $n=8192$ matrices with 1+3 repetitions. For each
problem and parameter category, we select the fastest pair in the category, 
measured in wall clock time. These parameters are used in larger scaling and correctness experiments.

\section{Results}
\label{sect:results}

\subsection{Variance experiment}

\begin{figure}[t]
    \centering
    \includegraphics[width=\linewidth]{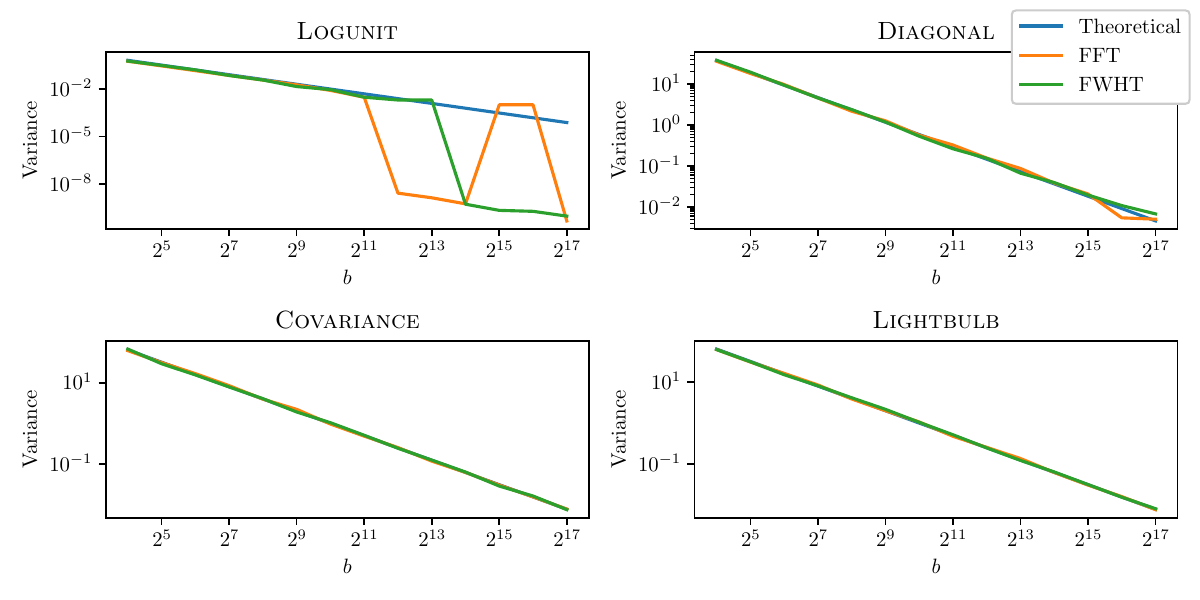}
    \caption{Variance experiment results. Note the logarithmic axes.}
    \label{fig:variancexperiment}
\end{figure}

Figure~\ref{fig:variancexperiment} shows the results of the variance experiment, for one random instance with $n=1024$ from
each instance categories, with 1000 independently drawn hash functions. Sample variance is shown as a function of the parameter $b$. We have fixed $d=1$. The blue line shows the theoretical bound of $\Vert C \Vert_F^2/b$. The variance bound is clearly satisfied rather tightly. The logunit instance stands out special because, with large $b$, the values are essentially correct in almost all cases, causing large deviations for individual outliers in logarithmic scale.

\subsection{Parameter selection}

Table~\ref{tab:parameters} lists the parameters selected according to the procedure of Section~\ref{sect:experimentsparameterselection}. The table lists the parameters $c_d$ and $c_b$ per problem, together with the 
parameter category. The relative difficulty of the different instances is obvious from the parameters, as very small parameters fall
in the \emph{perfect} category for \textsc{Logunit}, whereas \textsc{Covariance} or \textsc{Lightbulb}
would have required \emph{perfect} parameters beyond the scope of the grid search.

\begin{table}[t]
    \centering
    \caption{Parameters used for scaling experiments.}
    \label{tab:parameters}    
    \begin{tabular}{l|rr|rr|rr|rr}
    & \multicolumn{2}{c|}{\textbf{Perfect}} & \multicolumn{2}{c|}{\textbf{Good}} & 
    \multicolumn{2}{c|}{\textbf{Decent}} & \multicolumn{2}{c}{\textbf{Satisfactory}} \\
    \textbf{Problem}    & $c_d$ & $c_b$ & $c_d$ & $c_b$ & $c_d$ & $c_b$ & $c_d$ & $c_b$ \\\hline
    \textsc{Logunit}    & 1.0   & 0.5 & 0.25  & 1.0 & 0.25  & 0.25 & 0.25  & 0.25 \\
    \textsc{Diagonal}   & 3.25  & 4.0 & 2.0   & 4.0 & 1.75  & 2.0  & 0.75  & 4.0 \\
    \textsc{Covariance} & ---   & --- & ---   & --- & ---   & ---  & 1.5   & 4.0 \\
    \textsc{Lightbulb}  & ---   & --- & ---   & --- & ---   & ---  & 2.0   & 4.0 \\
    \end{tabular}
\end{table}

\subsection{Scaling experiment}

Figures~\ref{fig:verascaling} and~\ref{fig:minervascaling} show the scaling experiment results for Node~1 and~2, respectively. 
The marker shows the median of the 5 later runs. The errorbars show the minimum and maximum running times. All results are shown on a logarithmic scale. The results
show that both of our implementations beat \texttt{DGEMM} with all parametrizations on \textsc{Logunit}, and FWHT beats \texttt{DGEMM} up to \emph{Good}
parameters ($c_d=2.0,c_b=4.0$) on \textsc{Diagonal} where it becomes tied. The FWHT implementation narrowly beats 
\texttt{DGEMM} on \textsc{Lightbulb}.

The effect becomes even more pronounced if we look at the speedups in Figures~\ref{fig:veraspeedup} and~\ref{fig:minervaspeedup}. The upper row shows
the speedup of the FWHT routine vs. the \texttt{GEMM} baseline, whereas the lower row shows the speedup of the FWHT routine vs. the FFT
implementation. Speedups computed with both minimum and median values are shown. As expected, the speedup becomes more pronounced
with increased input size.
The speedup plot also shows prominent cache effects. 
There are speedup spikes at $n=2^6$, $n=2^9$, and $n=2^{16}$, suggesting bottlenecks with the 48~KiB L1 cache, 2~MiB L2 cache, and L3 cache, respectively.

For lack of space, we have only included the largest instances in Tables~\ref{tab:verascalingtabular} and~\ref{tab:minervascalingtabular} that show
the scaling of our implementation, together with speedups: SU1 = GEMM/FFT speedup, SU2 = GEMM/FWHT speedup, and SU3 = FFT/FWHT
speedup. It is not very surprising that the FFT vs. FWHT speedup  is intimately tied to the parameter~$b$ which is the size of the transform
computed; the larger the $b$, the more beneficial it is to use FWHT.

\subsection{Correctness experiment}
Table~\ref{tab:correctness} summarizess the results of the correctness experiment. The experiment was run with $n=65536$, and
10 distinct inputs were used. Each input was repeated 10 times with different hash functions. The table reveals that 
\textsc{Logunit} was very easy: even with worse parameters, we achieve perfect performance. \textsc{Diagonal} was likewise very easy,
and even satisfactory parameters provide near perfect performance.

However, \textsc{Covariance} and \textsc{Lightbulb} were considerably more difficult, as we have so much noise due to the dense
output matrix that, in terms of absolute error, we have only around 60\% hit rate with \emph{big} elements. However, even this parametrization is adequate to tell the \emph{big} and \emph{small} elements reliably apart: we can almost always identify
the unique big element without any ambiguity, with only one failure across all the 400 trials. The results are easier to interpret
if one keeps in mind that the estimate can be stated~\cite[Equation~2]{Pagh:2013}
\[
\tilde c_{i^*j^*}=c_{i^*j^*}+s(i^*,j^*)\sum_{\substack{h(i,j)=h(i^*,j^*) \\ i,j\neq i^*,j^*}}
s(i, j) c_{ij} \, ,
\]
 meaning that the estimate
is the correct value plus random-signed noise. As long as the noise does not accumulate too much absolute value,
the big element stands out.

\section{Conclusion}
\label{sect:conclusion}

We have presented a practical implementation of Pagh's compressed matrix
multiplication algorithm where we have replaced the FFT-based sketching with
one based on the FWHT, which preserves all theoretical guarantees and provides
clear performance advantages by being simpler to compute and making better
use of memory than FFT.

We have shown with a range of synthetic benchmarks that, although the algorithm
is not a general-purpose replacement to matrix multiplication, it can outperform
Intel's state-of-the-art CPU implementation of \texttt{DGEMM} under favorable circumstances,
up to a factor of 40. We also verified empirically that our FWHT implementation improves performance by as much as a factor of 4, satisfies the theoretical guarantees on estimator quality, and that the products computed by our algorithm are, in fact,
highly accurate.

\subsection*{Future work}
We have a lot of future work left for the full version of this
paper and future publications, including a more sophisticated parameter
selection strategy
and expanding on the empirical evaluation, determining the effect of different hash functions and the 
parallelization of the implementation, and more careful mapping of the speedup as a function of parameter choices.
The implementation itself could be extended: we need to add robust support for
single-precision floats, support for sparse matrix formats, and provide a GPU-based
implementation.

\section*{Acknowledgements}
Parts of the experiments were enabled by the computational and data storage resources 
provided by Chalmers e-Commons at Chalmers. Parts of the experiments were conducted using
the Minerva cluster at Chalmers Department of Computer Science and Engineering.

\section*{References}

\printbibliography[heading=none]

\newpage

\appendix

\section*{Appendix}

\begin{figure}[h!]
    \centering
    \includegraphics[width=0.95\linewidth]{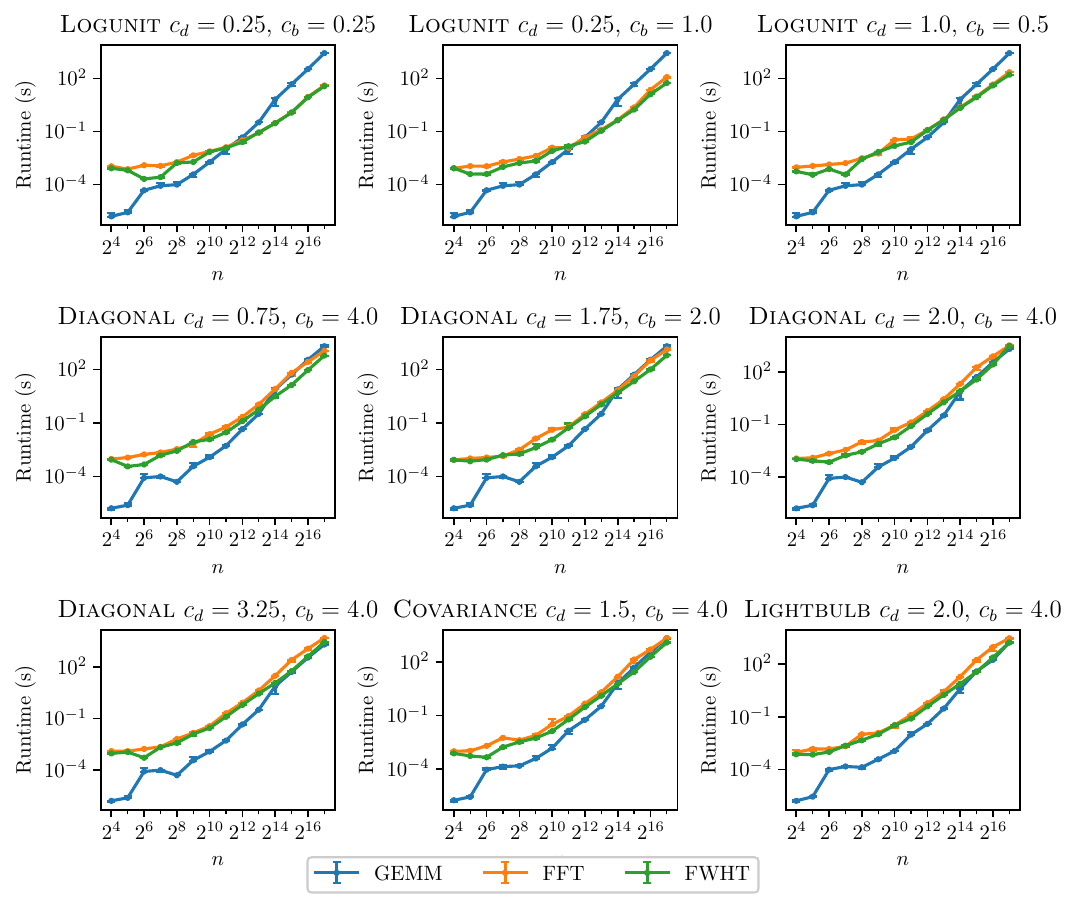}
    \caption{Scaling experiment results from Vera.}
    \label{fig:verascaling}
\end{figure}

\begin{figure}[h!]
    \centering
    \includegraphics[width=0.95\linewidth]{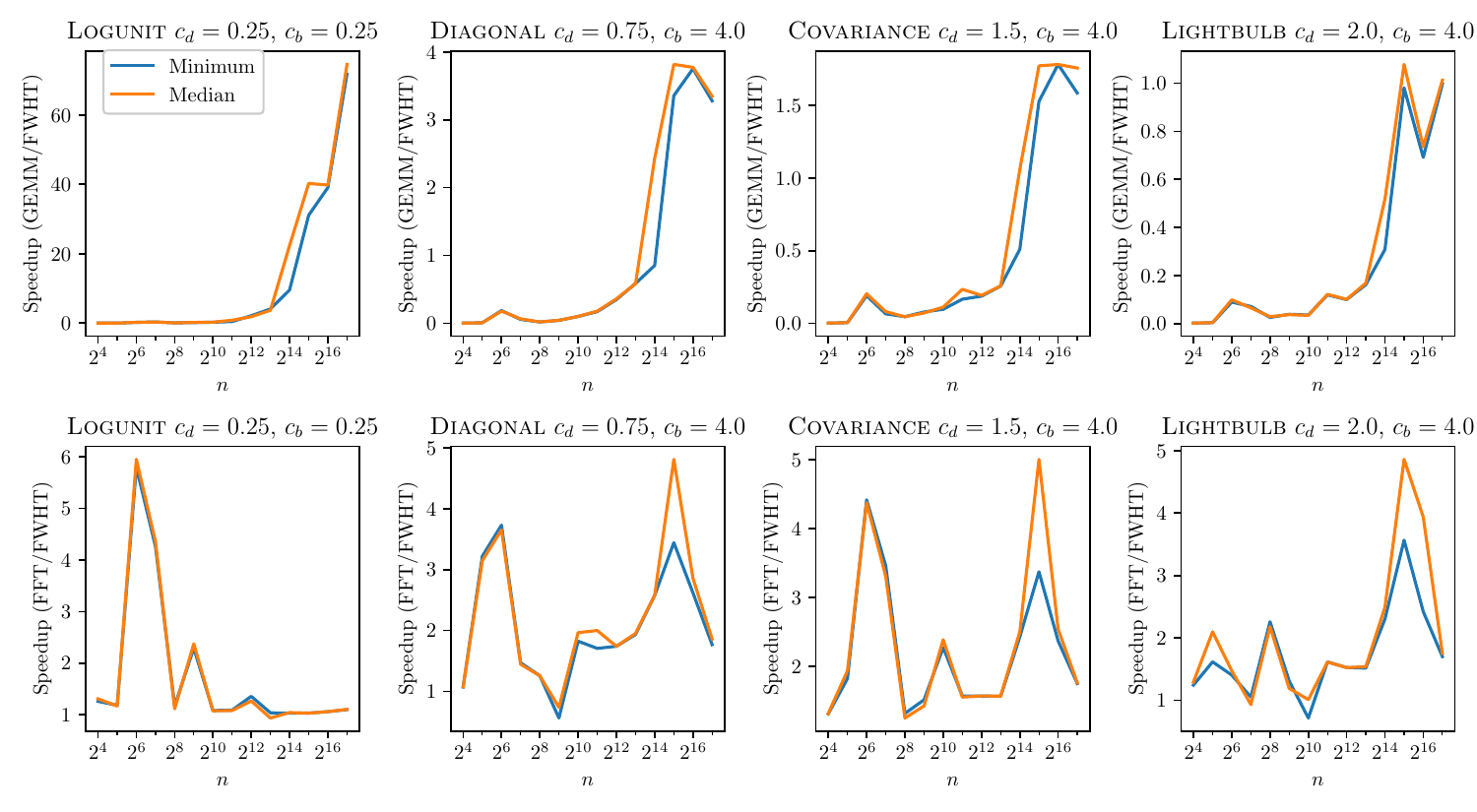}
    \caption{
        Speedup results from Vera. 
    }
    \label{fig:veraspeedup}
\end{figure}

\begin{figure}[h!]
    \centering
    \includegraphics[width=0.95\linewidth]{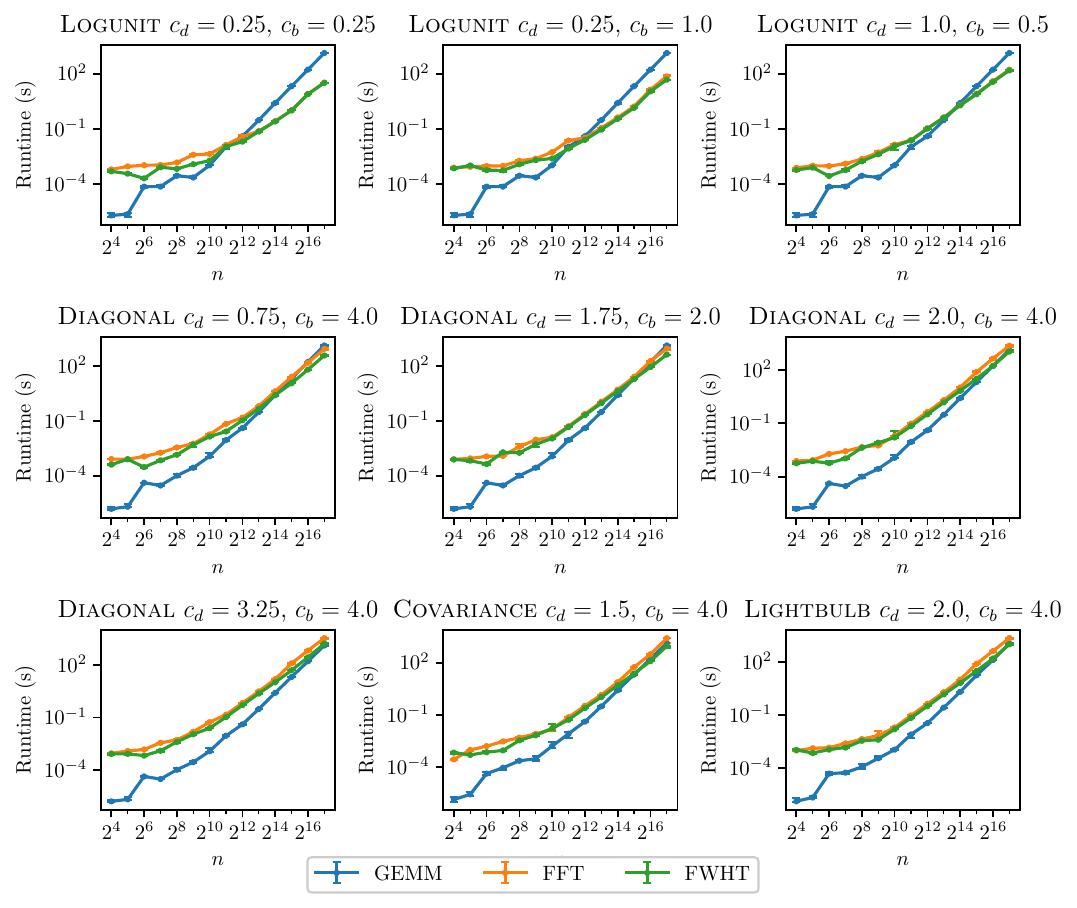}
    \caption{Scaling experiment results from Minerva.}
    \label{fig:minervascaling}
\end{figure}

\begin{figure}[h!]
    \centering
    \includegraphics[width=0.95\linewidth]{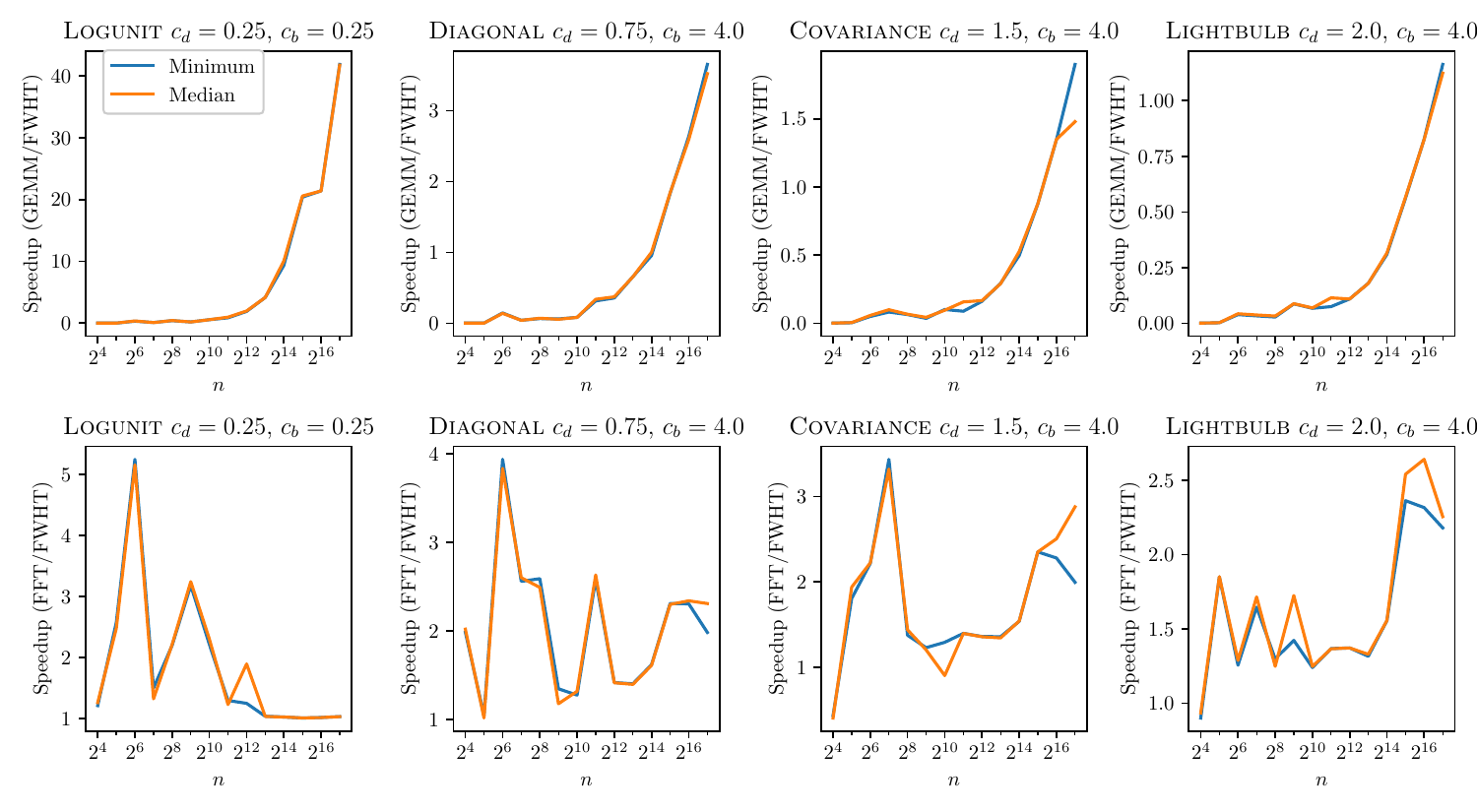}
    \caption{
        Speedup results from Minerva. 
    }
    \label{fig:minervaspeedup}
\end{figure}

\begin{table}[h]
    \centering
        \caption{Correctness results at $n=65536$ with $10\times 10$ repetitions. All $\epsilon 0.1$ means the average fraction
        of all elements within 0.1 in absolute error. Big $\epsilon \leq 0.1$ likewise for \emph{big} elements. Big $\geq 0.5$ and Small $\leq 0.5$ means the average fraction of all \emph{big} and \emph{small} elements with absolute value above or below 0.5.}
    \label{tab:correctness}

    \begin{tabular}{@{}lll@{}r@{\,}r@{\,}r@{\,}r@{}}
\textbf{Problem} & \textbf{Transform} & \textbf{Category} & \textbf{All $\epsilon \leq 0.1 $} & \textbf{Big $\epsilon \leq 0.1$} & \textbf{Big $\geq 0.5$} & \textbf{Small $\leq 0.5$}\\\hline
\textsc{Logunit} & FFT & Perfect & 100.00\% & 100.00\% & 100.00\% & 100.00\%\\
\textsc{Logunit} & FFT & Good & 100.00\% & 100.00\% & 100.00\% & 100.00\%\\
\textsc{Logunit} & FFT & Decent & 100.00\% & 100.00\% & 100.00\% & 100.00\%\\
\textsc{Logunit} & FFT & Satisfactory & 100.00\% & 100.00\% & 100.00\% & 100.00\%\\
\textsc{Diagonal} & FFT & Perfect & 100.00\% & 100.00\% & 100.00\% & 100.00\%\\
\textsc{Diagonal} & FFT & Good & 100.00\% & 100.00\% & 100.00\% & 100.00\%\\
\textsc{Diagonal} & FFT & Decent &  99.98\% &  99.98\% &  99.99\% &  99.99\%\\
\textsc{Diagonal} & FFT & Satisfactory &  99.97\% &  99.97\% &  99.98\% &  99.97\%\\
\textsc{Covariance} & FFT & Satisfactory &  57.96\% &  57.00\% & 100.00\% & 100.00\%\\
\textsc{Lightbulb} & FFT & Satisfactory &  64.41\% &  56.00\% & 100.00\% & 100.00\%\\
\textsc{Logunit} & FWHT & Perfect & 100.00\% & 100.00\% & 100.00\% & 100.00\%\\
\textsc{Logunit} & FWHT & Good & 100.00\% & 100.00\% & 100.00\% & 100.00\%\\
\textsc{Logunit} & FWHT & Decent & 100.00\% & 100.00\% & 100.00\% & 100.00\%\\
\textsc{Logunit} & FWHT & Satisfactory & 100.00\% & 100.00\% & 100.00\% & 100.00\%\\
\textsc{Diagonal} & FWHT & Perfect & 100.00\% & 100.00\% & 100.00\% & 100.00\%\\
\textsc{Diagonal} & FWHT & Good & 100.00\% & 100.00\% & 100.00\% & 100.00\%\\
\textsc{Diagonal} & FWHT & Decent &  99.98\% &  99.99\% &  99.99\% &  99.99\%\\
\textsc{Diagonal} & FWHT & Satisfactory &  99.97\% &  99.97\% &  99.99\% &  99.97\%\\
\textsc{Covariance} & FWHT & Satisfactory &  57.95\% &  60.00\% & 100.00\% & 100.00\%\\
\textsc{Lightbulb} & FWHT & Satisfactory &  64.42\% &  62.00\% &  99.00\% & 100.00\%\\
\end{tabular}
\end{table}

\begin{table}[h]
    \centering
    \caption{Running times of GEMM multiplication for the largest instances on Vera and Minerva. The times given are
    given here in seconds.}
    \label{tab:gemmruntimes}
    \begin{tabular}{@{}l@{\,\,}r@{\,\,}r@{\,\,}r@{\,\,}r@{\,\,}r@{\,\,}r@{\,\,}rr@{}}
 &  & \multicolumn{3}{c}{Vera} & \multicolumn{3}{c}{Minerva}\\
\textbf{Problem} & $n$ & Min & Med & Max & Min & Med & Max\\
\hline\textsc{Logunit} & $2^{15}$ & 36.3 & 47.1 & 53.5 & 20.9 & 21.1 & 21.1\\
\textsc{Logunit} & $2^{16}$ & 334.4 & 341.3 & 344.6 & 167.0 & 167.5 & 167.5\\
\textsc{Logunit} & $2^{17}$ & 2624.3 & 2726.6 & 2781.3 & 1344.8 & 1344.9 & 1345.0\\
\hline\textsc{Diagonal} & $2^{15}$ & 45.0 & 51.3 & 52.2 & 20.8 & 21.0 & 21.0\\
\textsc{Diagonal} & $2^{16}$ & 344.8 & 349.0 & 370.4 & 167.3 & 167.4 & 173.2\\
\textsc{Diagonal} & $2^{17}$ & 1950.4 & 2011.5 & 2208.6 & 1339.6 & 1340.6 & 1340.6\\
\hline\textsc{Covariance} & $2^{15}$ & 41.2 & 47.8 & 49.9 & 20.8 & 20.9 & 20.9\\
\textsc{Covariance} & $2^{16}$ & 339.9 & 345.6 & 352.3 & 166.4 & 167.3 & 168.1\\
\textsc{Covariance} & $2^{17}$ & 1911.9 & 2145.5 & 2169.2 & 1342.2 & 1342.4 & 1343.0\\
\hline\textsc{Lightbulb} & $2^{15}$ & 35.2 & 39.5 & 42.0 & 17.9 & 18.1 & 18.1\\
\textsc{Lightbulb} & $2^{16}$ & 168.0 & 179.4 & 200.0 & 132.9 & 133.1 & 133.3\\
\textsc{Lightbulb} & $2^{17}$ & 1724.6 & 1767.5 & 1887.5 & 1151.2 & 1151.5 & 1151.8\\
\end{tabular}
\end{table}

\begin{landscape}
\begin{table}[h]
    \centering
    \caption{Running times for largest instances on Minerva, given in seconds. SU1 = GEMM/FFT speedup,
    SU2 = GEMM/FWHT speedup, and SU3 = FFT/FWHT speedup.}
    \label{tab:verascalingtabular}
    \begin{tabular}{@{}l@{\,\,}r@{\,\,}r@{\,\,}r@{\,\,}r@{\,\,}r@{\,\,}r@{\,\,}r@{\,\,}r@{\,\,}r@{\,\,}r@{\,\,}r@{\,\,}r@{\,\,}r@{\,\,}r@{\,\,}r@{}}
 &  &  &  & \multicolumn{3}{c}{FFT} & \multicolumn{3}{c}{FWHT} & \multicolumn{2}{c}{SU1} & \multicolumn{2}{c}{SU2} & \multicolumn{2}{c}{SU3}\\
\textbf{Problem} & $n$ & $c_d$ & $c_b$ & Min & Med & Max & Min & Med & Max & Min & Med & Min & Med & Min & Med\\
\hline\textsc{Logunit} & $2^{15}$ & 0.25 & 0.25 & 1.2 & 1.2 & 1.2 & 1.2 & 1.2 & 1.2 & 30.22 & 39.16 & 31.10 & 40.27 & 1.03 & 1.03\\
\textsc{Logunit} & $2^{16}$ & 0.25 & 0.25 & 9.0 & 9.1 & 9.1 & 8.6 & 8.6 & 8.6 & 36.98 & 37.65 & 39.07 & 39.80 & 1.06 & 1.06\\
\textsc{Logunit} & $2^{17}$ & 0.25 & 0.25 & 40.1 & 40.2 & 40.3 & 36.6 & 36.6 & 36.6 & 65.43 & 67.77 & 71.79 & 74.57 & 1.10 & 1.10\\
\textsc{Logunit} & $2^{15}$ & 0.25 & 1.00 & 2.3 & 2.3 & 2.3 & 1.7 & 1.7 & 1.7 & 15.96 & 20.61 & 20.92 & 27.13 & 1.31 & 1.32\\
\textsc{Logunit} & $2^{16}$ & 0.25 & 1.00 & 22.6 & 22.8 & 22.9 & 12.7 & 12.7 & 13.3 & 14.78 & 14.96 & 26.31 & 26.79 & 1.78 & 1.79\\
\textsc{Logunit} & $2^{17}$ & 0.25 & 1.00 & 107.9 & 120.2 & 120.5 & 55.2 & 55.5 & 55.5 & 24.31 & 22.68 & 47.56 & 49.16 & 1.96 & 2.17\\
\textsc{Logunit} & $2^{15}$ & 1.00 & 0.50 & 9.6 & 9.6 & 9.6 & 8.7 & 8.7 & 8.7 & 3.78 & 4.90 & 4.20 & 5.44 & 1.11 & 1.11\\
\textsc{Logunit} & $2^{16}$ & 1.00 & 0.50 & 45.5 & 45.5 & 45.6 & 40.3 & 40.3 & 40.3 & 7.35 & 7.50 & 8.31 & 8.47 & 1.13 & 1.13\\
\textsc{Logunit} & $2^{17}$ & 1.00 & 0.50 & 222.1 & 225.8 & 226.2 & 164.0 & 164.1 & 164.2 & 11.81 & 12.08 & 16.00 & 16.61 & 1.35 & 1.38\\
\hline\textsc{Diagonal} & $2^{15}$ & 0.75 & 4.00 & 46.2 & 64.7 & 65.4 & 13.4 & 13.4 & 13.5 & 0.97 & 0.79 & 3.36 & 3.82 & 3.44 & 4.81\\
\textsc{Diagonal} & $2^{16}$ & 0.75 & 4.00 & 240.5 & 264.3 & 264.5 & 91.9 & 92.5 & 92.8 & 1.43 & 1.32 & 3.75 & 3.77 & 2.62 & 2.86\\
\textsc{Diagonal} & $2^{17}$ & 0.75 & 4.00 & 1051.4 & 1117.4 & 1126.0 & 594.8 & 600.7 & 603.8 & 1.85 & 1.80 & 3.28 & 3.35 & 1.77 & 1.86\\
\textsc{Diagonal} & $2^{15}$ & 1.75 & 2.00 & 41.7 & 43.3 & 44.7 & 21.4 & 21.6 & 21.7 & 1.08 & 1.19 & 2.10 & 2.38 & 1.95 & 2.01\\
\textsc{Diagonal} & $2^{16}$ & 1.75 & 2.00 & 265.7 & 320.5 & 323.6 & 100.3 & 100.3 & 100.3 & 1.30 & 1.09 & 3.44 & 3.48 & 2.65 & 3.19\\
\textsc{Diagonal} & $2^{17}$ & 1.75 & 2.00 & 1209.0 & 1317.8 & 1588.7 & 621.4 & 622.0 & 622.2 & 1.61 & 1.53 & 3.14 & 3.23 & 1.95 & 2.12\\
\textsc{Diagonal} & $2^{15}$ & 2.00 & 4.00 & 128.0 & 181.0 & 181.2 & 36.5 & 36.7 & 36.9 & 0.35 & 0.28 & 1.23 & 1.40 & 3.50 & 4.93\\
\textsc{Diagonal} & $2^{16}$ & 2.00 & 4.00 & 576.2 & 771.1 & 775.3 & 259.9 & 263.4 & 264.9 & 0.60 & 0.45 & 1.33 & 1.33 & 2.22 & 2.93\\
\textsc{Diagonal} & $2^{17}$ & 2.00 & 4.00 & 3029.7 & 3168.0 & 3189.8 & 3098.0 & 3133.1 & 3135.9 & 0.64 & 0.63 & 0.63 & 0.64 & 0.98 & 1.01\\
\textsc{Diagonal} & $2^{15}$ & 3.25 & 4.00 & 205.3 & 254.7 & 257.5 & 57.2 & 57.7 & 59.3 & 0.22 & 0.20 & 0.79 & 0.89 & 3.59 & 4.42\\
\textsc{Diagonal} & $2^{16}$ & 3.25 & 4.00 & 1002.5 & 1153.3 & 1190.6 & 408.2 & 409.3 & 441.5 & 0.34 & 0.30 & 0.84 & 0.85 & 2.46 & 2.82\\
\textsc{Diagonal} & $2^{17}$ & 3.25 & 4.00 & 4697.2 & 4787.1 & 4811.3 & 2743.8 & 2804.9 & 2879.7 & 0.42 & 0.42 & 0.71 & 0.72 & 1.71 & 1.71\\
\hline\textsc{Covariance} & $2^{15}$ & 1.50 & 4.00 & 90.8 & 135.0 & 135.0 & 26.9 & 27.0 & 27.1 & 0.45 & 0.35 & 1.53 & 1.77 & 3.37 & 5.01\\
\textsc{Covariance} & $2^{16}$ & 1.50 & 4.00 & 451.7 & 492.6 & 494.6 & 190.7 & 193.9 & 194.8 & 0.75 & 0.70 & 1.78 & 1.78 & 2.37 & 2.54\\
\textsc{Covariance} & $2^{17}$ & 1.50 & 4.00 & 2109.6 & 2149.5 & 2153.0 & 1205.7 & 1220.8 & 1236.3 & 0.91 & 1.00 & 1.59 & 1.76 & 1.75 & 1.76\\
\hline\textsc{Lightbulb} & $2^{15}$ & 2.00 & 4.00 & 127.9 & 178.2 & 178.7 & 35.9 & 36.6 & 36.7 & 0.28 & 0.22 & 0.98 & 1.08 & 3.57 & 4.86\\
\textsc{Lightbulb} & $2^{16}$ & 2.00 & 4.00 & 588.3 & 960.5 & 961.6 & 243.0 & 243.9 & 244.6 & 0.29 & 0.19 & 0.69 & 0.74 & 2.42 & 3.94\\
\textsc{Lightbulb} & $2^{17}$ & 2.00 & 4.00 & 2939.8 & 3055.9 & 3138.5 & 1730.2 & 1745.5 & 1748.7 & 0.59 & 0.58 & 1.00 & 1.01 & 1.70 & 1.75\\
\end{tabular}
\end{table}

\begin{table}[h]
    \centering
    \caption{Running times for largest instances on Minerva, given in seconds. SU1 = GEMM/FFT speedup,
    SU2 = GEMM/FWHT speedup, and SU3 = FFT/FWHT speedup.}
    \label{tab:minervascalingtabular}
    \begin{tabular}{@{}l@{\,\,}r@{\,\,}r@{\,\,}r@{\,\,}r@{\,\,}r@{\,\,}r@{\,\,}r@{\,\,}r@{\,\,}r@{\,\,}r@{\,\,}r@{\,\,}r@{\,\,}r@{\,\,}r@{\,\,}r@{}}
 &  &  &  & \multicolumn{3}{c}{FFT} & \multicolumn{3}{c}{FWHT} & \multicolumn{2}{c}{SU1} & \multicolumn{2}{c}{SU2} & \multicolumn{2}{c}{SU3}\\
\textbf{Problem} & $n$ & $c_d$ & $c_b$ & Min & Med & Max & Min & Med & Max & Min & Med & Min & Med & Min & Med\\
\hline\textsc{Logunit} & $2^{15}$ & 0.25 & 0.25 & 1.0 & 1.0 & 1.0 & 1.0 & 1.0 & 1.0 & 20.15 & 20.29 & 20.39 & 20.58 & 1.01 & 1.01\\
\textsc{Logunit} & $2^{16}$ & 0.25 & 0.25 & 8.0 & 8.0 & 8.0 & 7.8 & 7.8 & 7.8 & 20.96 & 20.98 & 21.40 & 21.39 & 1.02 & 1.02\\
\textsc{Logunit} & $2^{17}$ & 0.25 & 0.25 & 33.3 & 33.3 & 33.4 & 32.1 & 32.2 & 32.3 & 40.44 & 40.34 & 41.87 & 41.76 & 1.04 & 1.04\\
\textsc{Logunit} & $2^{15}$ & 0.25 & 1.00 & 1.6 & 1.7 & 1.7 & 1.4 & 1.4 & 1.4 & 12.68 & 12.58 & 15.04 & 14.96 & 1.19 & 1.19\\
\textsc{Logunit} & $2^{16}$ & 0.25 & 1.00 & 13.8 & 13.9 & 14.0 & 10.7 & 11.0 & 11.1 & 12.13 & 12.07 & 15.58 & 15.22 & 1.29 & 1.26\\
\textsc{Logunit} & $2^{17}$ & 0.25 & 1.00 & 71.9 & 72.9 & 88.2 & 47.8 & 48.0 & 48.3 & 18.70 & 18.46 & 28.11 & 28.01 & 1.50 & 1.52\\
\textsc{Logunit} & $2^{15}$ & 1.00 & 0.50 & 8.3 & 8.3 & 8.3 & 7.7 & 7.7 & 7.8 & 2.53 & 2.55 & 2.71 & 2.73 & 1.07 & 1.07\\
\textsc{Logunit} & $2^{16}$ & 1.00 & 0.50 & 39.3 & 39.3 & 39.4 & 36.6 & 36.6 & 36.7 & 4.25 & 4.26 & 4.56 & 4.57 & 1.07 & 1.07\\
\textsc{Logunit} & $2^{17}$ & 1.00 & 0.50 & 167.4 & 167.6 & 167.8 & 150.8 & 151.7 & 151.8 & 8.03 & 8.03 & 8.92 & 8.86 & 1.11 & 1.10\\
\hline\textsc{Diagonal} & $2^{15}$ & 0.75 & 4.00 & 26.2 & 26.2 & 26.3 & 11.3 & 11.4 & 11.4 & 0.79 & 0.80 & 1.83 & 1.84 & 2.31 & 2.30\\
\textsc{Diagonal} & $2^{16}$ & 0.75 & 4.00 & 146.3 & 151.1 & 163.1 & 63.5 & 64.6 & 64.8 & 1.14 & 1.11 & 2.64 & 2.59 & 2.30 & 2.34\\
\textsc{Diagonal} & $2^{17}$ & 0.75 & 4.00 & 726.9 & 878.7 & 1003.6 & 366.9 & 380.8 & 380.9 & 1.84 & 1.53 & 3.65 & 3.52 & 1.98 & 2.31\\
\textsc{Diagonal} & $2^{15}$ & 1.75 & 2.00 & 25.9 & 26.0 & 26.0 & 19.2 & 19.2 & 19.4 & 0.80 & 0.81 & 1.08 & 1.09 & 1.35 & 1.35\\
\textsc{Diagonal} & $2^{16}$ & 1.75 & 2.00 & 168.2 & 189.4 & 190.1 & 88.2 & 88.4 & 88.5 & 0.99 & 0.88 & 1.90 & 1.89 & 1.91 & 2.14\\
\textsc{Diagonal} & $2^{17}$ & 1.75 & 2.00 & 801.7 & 910.6 & 933.3 & 424.1 & 425.5 & 428.1 & 1.67 & 1.47 & 3.16 & 3.15 & 1.89 & 2.14\\
\textsc{Diagonal} & $2^{15}$ & 2.00 & 4.00 & 75.0 & 77.1 & 82.5 & 32.1 & 32.1 & 32.2 & 0.28 & 0.27 & 0.65 & 0.65 & 2.34 & 2.40\\
\textsc{Diagonal} & $2^{16}$ & 2.00 & 4.00 & 370.5 & 436.1 & 452.0 & 167.9 & 169.9 & 170.7 & 0.45 & 0.38 & 1.00 & 0.99 & 2.21 & 2.57\\
\textsc{Diagonal} & $2^{17}$ & 2.00 & 4.00 & 2071.5 & 2301.3 & 2377.8 & 1029.7 & 1086.4 & 1106.8 & 0.65 & 0.58 & 1.30 & 1.23 & 2.01 & 2.12\\
\textsc{Diagonal} & $2^{15}$ & 3.25 & 4.00 & 117.4 & 127.2 & 129.8 & 50.9 & 50.9 & 50.9 & 0.18 & 0.17 & 0.41 & 0.41 & 2.31 & 2.50\\
\textsc{Diagonal} & $2^{16}$ & 3.25 & 4.00 & 619.1 & 684.2 & 742.4 & 274.9 & 282.4 & 283.3 & 0.27 & 0.24 & 0.61 & 0.59 & 2.25 & 2.42\\
\textsc{Diagonal} & $2^{17}$ & 3.25 & 4.00 & 3334.8 & 3539.1 & 3612.9 & 1625.3 & 1713.4 & 1746.9 & 0.40 & 0.38 & 0.82 & 0.78 & 2.05 & 2.07\\
\hline\textsc{Covariance} & $2^{15}$ & 1.50 & 4.00 & 55.8 & 56.0 & 58.5 & 23.7 & 23.8 & 23.8 & 0.37 & 0.37 & 0.88 & 0.88 & 2.35 & 2.35\\
\textsc{Covariance} & $2^{16}$ & 1.50 & 4.00 & 281.7 & 310.5 & 319.3 & 123.5 & 124.0 & 124.1 & 0.59 & 0.54 & 1.35 & 1.35 & 2.28 & 2.50\\
\textsc{Covariance} & $2^{17}$ & 1.50 & 4.00 & 1410.6 & 2612.0 & 2682.2 & 706.5 & 907.3 & 926.6 & 0.95 & 0.51 & 1.90 & 1.48 & 2.00 & 2.88\\
\hline\textsc{Lightbulb} & $2^{15}$ & 2.00 & 4.00 & 75.4 & 81.2 & 81.7 & 31.9 & 32.0 & 32.0 & 0.24 & 0.22 & 0.56 & 0.57 & 2.36 & 2.54\\
\textsc{Lightbulb} & $2^{16}$ & 2.00 & 4.00 & 372.2 & 425.3 & 430.4 & 160.8 & 161.1 & 161.3 & 0.36 & 0.31 & 0.83 & 0.83 & 2.32 & 2.64\\
\textsc{Lightbulb} & $2^{17}$ & 2.00 & 4.00 & 2157.5 & 2311.2 & 2344.0 & 990.2 & 1025.0 & 1035.7 & 0.53 & 0.50 & 1.16 & 1.12 & 2.18 & 2.25\\
\end{tabular}
\end{table}
\end{landscape}

\end{document}